\newtheorem{lem}{Lemma}
\newtheorem{thm}{Theorem}
\theoremstyle{definition}
\title{Learning Gaussian Operations and the Matchgate Hierarchy}
\author{\IEEEauthorblockN{Josh Cudby}
\IEEEauthorblockA{DAMTP, Centre for Mathematical Sciences\\
University of Cambridge, Cambridge CB30WA, UK\\
Email: jjcc2@cam.ac.uk}
\and
\IEEEauthorblockN{Sergii Strelchuk}
\IEEEauthorblockA{DAMTP, Centre for Mathematical Sciences\\
University of Cambridge, Cambridge CB30WA, UK}
}
\date{April 2024}
\begin{document}

\maketitle
\begin{abstract}
    Learning an unknown quantum process is a central task for validation of the functioning of near-term devices.
    The task is generally hard, requiring exponentially many measurements if no prior assumptions are made on the process.
    However, an interesting feature of the classically-simulable Clifford group is that unknown Clifford operations may be efficiently determined from a black-box implementation.
    We extend this result to the important class of fermionic Gaussian operations. 
    These operations have received much attention due to their close links to fermionic linear optics.
    We then introduce an infinite family of unitary gates, called the Matchgate Hierarchy, with a similar structure to the Clifford Hierarchy.
    We show that the Clifford Hierarchy is contained within the Matchgate Hierarchy and how operations at any level of the hierarchy can be efficiently learned.
\end{abstract}
\begin{IEEEkeywords}
    Quantum Computing, Matchgates, Tomography, Learning.
\end{IEEEkeywords}

\section{Introduction}
Quantum process tomography is the task of determining an unknown quantum operator by applying it to certain known states.
In general, this task requires a number of measurements which grows exponentially with the number of qubits~\cite{chuang_prescription_1997, poyatos_complete_1997}.
This complexity can be improved if there is some prior knowledge about the process.
For example, restricting to the set of Pauli matrices reduces the query complexity to 1 for any number of qubits, for essentially the same reason as superdense coding~\cite{bennett_communication_1992} is possible.
A generalisation of learning Paulis was given by Low~\cite{low_learning_2009}, where they demonstrated how to learn any unknown Clifford Hierarchy operation. 

The Clifford Hierarchy $\mathcal{C}_k$ was introduced in~\cite{gottesman_quantum_1999} as an infinite family of unitary gates on any number of qubits which can be fault-tolerantly implemented using gate teleportation as a primitive.
Denoting the Pauli group by $\mathcal{C}_1$, the hierarchy is defined recursively for $k \geq 2$ by:
\begin{equation}
    \mathcal{C}_k \coloneqq \{ U : U \mathcal{C}_1 U^\dag \subseteq \mathcal{C}_{k-1} \} / U(1). 
\end{equation}
The Clifford group is the second level of the hierarchy, $\mathcal{C}_2$. 
It is well-known that circuits consisting only of Clifford gates and a single-qubit output measurement are classically simulable~\cite{gottesman_heisenberg_1998}.

The Clifford Hierarchy has received much attention due to its connections to fault-tolerance~\cite{zhou_methodology_2000,pllaha_-weyl-ing_2020,de_silva_efficient_2021}, magic state distillation~\cite{rengaswamy_unifying_2019} and learning~\cite{low_learning_2009}. 
The Clifford Hierarchy provides an elegant way to operationally describe the power of practically relevant classes of unitary operations: levels in the hierarchy represent a certain degree of hardness when implementing them.

Most pertinently to this work, it was shown in~\cite{low_learning_2009} that any Clifford gate $C$ on $n$ qubits may be learned (up to phase) using $\mathcal{O}(n)$ queries to black boxes for $C$ and $C^\dag$.
Moreover, they showed that any gate in $\mathcal{C}_k$ may similarly be learned, now requiring $\mathcal{O}((2n)^k)$ queries to the oracles.
The learning algorithms are based on the observation that knowledge of $U \sigma_p U^\dag$ for a set $\{\sigma_p\}$ which \emph{generates} $\mathcal{C}_1$ suffices to learn $U$.

Another class of important quantum operations is the set of \emph{(fermionic) Gaussian operations}, introduced in~\cite{valiant_quantum_2001} and studied in~\cite{terhal_classical_2002, jozsa_matchgates_2008} and more recently in~\cite{cudby_gaussian_2024,dias_classical_2023,reardon-smith_improved_2023}.
The authors of~\cite{terhal_classical_2002} described a deep relation between Gaussian operations and the physical theory of free fermions. It is then that authors first established that computational capabilities of unassisted fermionic linear optics can be precisely captured by these operations. Despite the fact that these operations are classically efficiently simulatable, their unique structure provides a complete characterisation of log-space bounded universal unitary quantum computation~\cite{jozsa_matchgate_2010}.

For a system of $n$ free fermions with creation operators $a_1^\dag,\,\ldots,\,a_n^\dag$ we define $2n$ \emph{Majorana operators}:
\begin{equation}
    \gamma_{2l-1} \coloneqq a_l + a_l^\dag, \qquad \gamma_{2l} \coloneqq -i(a_l - a_l^\dag),
\end{equation}
where $l \in [n] \coloneqq \{1,\,\ldots,\,n\}$.
The Majorana operators are Hermitian, square to the identity and anti-commute with each other.
For a set of indices $S \subseteq [2n]$, we denote by $\gamma_S$ the \emph{Majorana monomial} consisting of the product of the Majorana operators indexed by $S$ in increasing order.
When multiplying Majorana monomials together, we will use the notation $\gamma_S \gamma_{S'} = (\pm) \gamma_{S \Delta S'}$, where $\Delta$ denotes the symmetric difference of two sets and the possible factor of $-1$ comes from commuting the Majorana operators past each other.

We may choose the Jordan-Wigner representation~\cite{jordan_uber_1928} of the Majorana operators, giving
\begin{equation}
    \gamma_{2l - 1} = \left(\prod_{i = 1}^{l - 1} Z_i\right) X_l, \quad
    \gamma_{2l} = \left(\prod_{i = 1}^{l - 1} Z_i\right) Y_l,
\end{equation}
where $Z_i$ denotes the $n$-qubit operator which acts as the $1$-qubit Pauli $Z$ on the $i$th qubit and the identity elsewhere, and similarly for $X_i$ and $Y_i$.

Gaussian operations are the set of unitaries $M$ given by exponentials of \emph{(fermionic) quadratic Hamiltonians}:
\begin{align}\label{eq:quad_ham}
    H = i \sum_{\mu,\,\nu} h_{\mu \nu} \gamma_\mu \gamma_\nu, \qquad M = e^{iH}
\end{align}
with $h = (h_{\mu \nu})$ w.l.o.g. a real, anti-symmetric matrix. 

It can be shown~\cite{jozsa_matchgates_2008} that Gaussian operations preserve the linear span of $\gamma_\mu$ under conjugation:
\begin{equation}\label{eq:gaussian_op}
    M \gamma_\mu M^\dag = \sum_{\nu = 1}^{2n} Q_{\mu \nu} \gamma_\nu
\end{equation}
for a special orthogonal matrix $Q = (Q_{\mu \nu}) \in SO(2n)$, given by $Q = e^{4h}$.

An immediate consequence of~\eqref{eq:gaussian_op} is that:
\begin{equation}
    M_Q \gamma_S M_Q^\dag = \sum_{\abs{S'} = \abs{S}} \det(Q|_{S,\,S'}) \gamma_{S'},
\end{equation}
where the sum is over subsets of $[2n]$ of equal size to $S$, and $Q|_{S,\,S'}$ denotes the restriction of $Q$ to rows indexed by $S$ and columns indexed by $S'$.
In particular, the number of Majorana operators is preserved in each term.

Since the Majorana monomials form a basis of linear operators on $n$ qubits, $M$ is fully determined by~\eqref{eq:gaussian_op} up to overall phase.
We therefore label Gaussian operations by their corresponding orthogonal matrix, writing $M_Q$ for the operation defined by~\eqref{eq:gaussian_op}.

Gaussian operations may always be expressed as circuits of certain 2-qubit parity-preserving gates, known as matchgates. 
The matrices representing these gates have the following form:
\begin{equation}
    \begin{gathered}
    G(A, \, B) \coloneqq \begin{pmatrix}
        p & 0 & 0 & q \\
        0 & w & x & 0 \\
        0 & y & z & 0 \\
        r & 0 & 0 & s
    \end{pmatrix}, \\
    A = \begin{pmatrix}
        p & q \\
        r & s
    \end{pmatrix},
    \quad
    B = \begin{pmatrix}
        w & x \\
        y & z
    \end{pmatrix},
    \label{eq:matchgate}
    \end{gathered}
\end{equation}
where $\det(A) = \det(B)$.

In particular, a Gaussian operation $M$ on $n$ qubits is expressible as a circuit of at most $\mathcal{O}(n^3)$ matchgates $G(A,\,B)$~\cite{jozsa_matchgates_2008}. 
The set of Gaussian unitaries on $n$ modes forms a group, usually referred to as the \emph{matchgate group} and denoted by $\mathcal{G}_n$.

The set of (possibly mixed) \emph{Gaussian states} are given by the closure of Gibbs states of quadratic Hamiltonians:
\begin{equation}
    \rho_h = \frac{e^{-H}}{\Tr(e^{-H})}.
\end{equation}
For pure states, this coincides with the set of output states after applying some Gaussian operation to $\ket{0}$.

For any density matrix $\rho$, we define the \emph{correlation matrix} $\Gamma(\rho) = (\Gamma(\rho)_{jk})$ by
\begin{equation}
    \Gamma(\rho)_{jk} \coloneqq
\frac{i}{2} \langle [\gamma_j, \gamma_k] \rangle_\rho 
= \frac{i}{2}\Tr\left( [\gamma_j, \gamma_k] \rho \right).
\end{equation}
$\Gamma$ is easily seen to be real and anti-symmetric.
For a Gaussian state $\rho_h$, it can also be checked that $\Gamma(\rho_h) = \tan(2h)$. Since $\tan$ is injective for anti-symmetric matrices, we see that Gaussian states are uniquely defined by their correlation matrix.

In this work, we present an algorithm which (approximately) determines unknown Gaussian operations in polynomial query complexity. 
We then recursively define an infinite family of gates, mimicking the Clifford Hierarchy for the fermionic setting, which we call the Matchgate Hierarchy.
We then generalise our learning algorithm to show that elements from this hierarchy can also be efficiently determined.

Since the group of Gaussian operations is continuous, learning an unknown member of the group can only be done up to some finite precision.
Further, we are unable to distinguish physically equivalent unitaries which differ only by a global phase.
We, therefore, consider the two metrics used in~\cite{low_learning_2009}, which we refer to as the phase-sensitive and phase-insensitive distances respectively:
\begin{align}
    D^+(U_1,\,U_2) 
    &\coloneqq \frac{1}{\sqrt{2 d}} \norm{U_1 - U_2}_2 \nonumber\\
    &\equiv \sqrt{1 - \Re\left(d^{-1}\Tr(U_1^\dag U_2)\right)},\\
    D(U_1,\,U_2) 
    &\coloneqq \frac{1}{\sqrt{2 d^2}} \norm{U_1 \otimes U_1^* - U_2 \otimes U_2^*}_2 \nonumber\\
    &\equiv \sqrt{1 - \abs{d^{-1}\Tr(U_1^\dag U_2)}^2},
\end{align}
where $\norm{A}_2 = \sqrt{\Tr(A^\dag A)}$.
Although $D$ is not a true distance, we can check that: $D(U_1,\,U_2) = 0$ implies $U_1 = e^{i\phi}U_2$ for some phase $\phi$; $D$ is unitarily invariant; and $D$ satisfies the triangle inequality.
Since $D$ is phase-insensitive, it is a useful measure for the quality of our estimates for unknown Gaussian unitaries. 
\section{Learning Gaussian Operations}
Now suppose that $M = \exp(iH)$ is an unknown Gaussian operation.
If we have access to copies of the Gibbs state $\rho_h$, it is easy to learn $h$ and therefore $M$.
Since $\Gamma(\rho_h) = \tan(2h)$, we only need to compute estimates for the $n(2n-1)$ independent entries of $\Gamma(\rho_h)$ to obtain an estimate for $h$.
This process requires $\mathcal{O}(n(2n-1)/\epsilon^2)$ copies of the Gibbs state to obtain a precision of $\epsilon$.

Instead, suppose we only have access to the gates $M$ and $M^\dag$ as black boxes.

We consider a 3-step learning process.
\begin{enumerate}
    \item We perform a direct sampling process of each row of $Q$ in turn, obtaining a guess $\tilde{Q}$, which is identical to $Q$ up to additive error $\epsilon$ and entry-wise incorrect signs.
    \item We compute estimates for $\Gamma(\phi)_{jk}$ for certain pure Gaussian states $\ket{\phi} = M X_l\ket{0}$, refining our guess to $\overline{Q}$, which equals $Q$ up to overall signs for each row and additive error.
    \item We perform one final measurement to correct the signs for each row, leading to an additive error estimate of $Q$.
\end{enumerate}
The above protocol leads to the following result:

\begin{thm}\label{thm:m2_learn}
    Given oracle access to an unknown operation $M \in \mathcal{G}_n$ and its conjugate $M^\dag$, 
    an estimate $M'$ of $M$ satisfying $D(M,\,M') = \mathcal{O}(n^{3} \eta)$ can be determined w.h.p.~with query complexity $\tilde{\mathcal{O}}((n/\eta)^2)$, for any $\eta = \mathcal{O}(n^{-6})$.
\end{thm}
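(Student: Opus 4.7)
The plan is to implement the three-stage protocol outlined just before the theorem and then combine the stages with a standard perturbation bound on $D$. Throughout I exploit \eqref{eq:gaussian_op}: conjugation by $M$ acts as the orthogonal matrix $Q$ on single Majoranas, so learning $M$ (up to global phase) is equivalent to learning $Q$ entrywise. For Step~1 I interpret the ``direct sampling of each row'' as a Born-rule estimator for the magnitudes $|Q_{\mu\nu}|$: prepare structured initial states $\ket{\psi_\mu}$ obtained by placing single-qubit $\ket{+}$ or $\ket{+i}$ eigenstates on the appropriate qubit under the Jordan--Wigner convention, apply $M$, and read out Majorana quadratic observables whose expectation values depend on $|Q_{\mu\nu}|^2$. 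A Hoeffding bound gives additive precision $\eta$ per entry from $\tilde{\mathcal{O}}(1/\eta^2)$ shots, and batching measurements of the $n$ mutually commuting pairs $\gamma_{2l-1}\gamma_{2l}$ on each preparation keeps the total query count at $\tilde{\mathcal{O}}((n/\eta)^2)$. The resulting $\tilde Q$ has $|\tilde Q_{\mu\nu}|=|Q_{\mu\nu}|+\mathcal{O}(\eta)$ but unresolved entry signs.

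For Step~2 I would exploit the Gaussian states $\ket{\phi_l}\coloneqq M X_l\ket{0} = M\gamma_{2l-1}\ket{0}$. Using $\gamma_{2l-1}\gamma_\mu\gamma_{2l-1}=-\gamma_\mu$ for $\mu\neq 2l-1$, a short calculation gives $\Gamma(\phi_l)=QR_lJR_lQ^T$, where $J=\Gamma(\ket{0}\bra{0})$ and $R_l$ is the diagonal $\pm 1$ reflection flipping every coordinate except position $2l-1$. Consequently
\[
    \Gamma(\phi_l)_{\mu\nu} - \Gamma(M\ket{0})_{\mu\nu} = 2\bigl(Q_{\mu,2l-1}Q_{\nu,2l} - Q_{\mu,2l}Q_{\nu,2l-1}\bigr),
\]
twice the $2\times 2$ minor of $Q$ on rows $\{\mu,\nu\}$ and columns $\{2l-1,2l\}$. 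Estimating these minors within the same sample budget and combining them with the Step-1 magnitudes yields a system of sign equations; a row-by-row combinatorial argument (each minor simultaneously constrains the signs of four entries) determines every sign up to one global sign per row, yielding $\overline Q$. Step~3 then resolves the remaining $2n$ global row-signs with a single Hadamard-test-style measurement per row---for example, estimating $\Re\bra{\chi}M^\dagger\gamma_\nu M\ket{\chi}$ against an odd-parity reference $\ket{\chi}$ reveals the sign of one designated entry---contributing only $\tilde{\mathcal{O}}(n/\eta^2)$ extra queries.

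The final step is error propagation. Entrywise error $|Q'_{\mu\nu}-Q_{\mu\nu}|=\mathcal{O}(\eta)$ gives $\|Q'-Q\|_F=\mathcal{O}(n\eta)$, which under $\eta=\mathcal{O}(n^{-6})$ is far below one, so the principal logarithm $h'=\tfrac14\log Q'$ is well-defined and locally Lipschitz, giving $\|h'-h\|_F=\mathcal{O}(n\eta)$. Passing from $h$ to $H=i\sum h_{\mu\nu}\gamma_\mu\gamma_\nu$ via the triangle inequality and Cauchy--Schwarz, propagating through the $1$-Lipschitz matrix exponential, and converting operator-norm error into the Frobenius-based distance $D$ each contribute polynomial-in-$n$ factors, which combine to produce $D(M,M')=\mathcal{O}(n^3\eta)$; the hypothesis $\eta=\mathcal{O}(n^{-6})$ is exactly what keeps all intermediate local-Lipschitz constants $\mathcal{O}(1)$. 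I expect Step~2 to be the main obstacle: one must show that the sign-constraint system derived from the $2\times 2$ minors is both solvable in closed form (handling degenerate cases where $|Q_{\mu,2l-1}Q_{\nu,2l}|\approx|Q_{\mu,2l}Q_{\nu,2l-1}|$) and numerically stable to the $\mathcal{O}(\eta)$ magnitude noise. The remainder of the argument, particularly the log/exp perturbation bounds, is routine.
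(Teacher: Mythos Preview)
Your outline matches the paper's three-stage protocol at the top level, but several technical pieces are either different from or missing relative to the paper's argument.

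\textbf{Step 1.} The paper does not use product states and quadratic observables. It prepares the maximally entangled state $\ket{\psi}$, applies $(M\gamma_\mu\otimes I)$, and measures in the orthonormal basis $\{(\gamma_S M\otimes I)\ket{\psi}\}$; the Born probability of outcome $S=\{\nu\}$ is exactly $Q_{\mu\nu}^2$. Your proposal---expectation values of $i\gamma_{2l-1}\gamma_{2l}$ on $M\ket{\psi_\mu}$---produces quantities \emph{linear} in products $Q_{aj}Q_{bk}$, not $|Q_{\mu\nu}|^2$, so direct access to the magnitudes does not follow as written.

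\textbf{Step 2.} Your minor formula has the row/column indices transposed (the paper obtains $\det(Q|_{\{2l-1,2l\},\{j,k\}})$), which is harmless. More importantly, the paper \emph{does} resolve the degeneracy you flag as the main obstacle: for Haar-random $Q\in SO(2n)$ it proves $\mathbb{P}(F<\kappa)=O(\kappa^{1/3})$ for each of the three obstructing quantities $F$, and a union bound over the $2n^2$ sign decisions forces $\kappa$---and hence $\eta,\epsilon$---to be $O(n^{-6})$. So the hypothesis $\eta=O(n^{-6})$ exists to make Step~2 succeed w.h.p.\ over Haar-random $Q$, not (as you state) to keep Lipschitz constants $O(1)$ in the error propagation.

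\textbf{Step 3.} The paper fixes all $2n$ row signs in a \emph{single} shot: prepare $(M_{\overline Q}^\dag\otimes I)\ket{\psi}$, measure in $\{(\gamma_S M^\dag\otimes I)\ket{\psi}\}$, and the outcome $S$ is deterministically the set of wrongly-signed rows. Your per-row Hadamard tests would also fit the budget but are not what is done.

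\textbf{Error propagation.} Your claim $\|h'-h\|_F=O(n\eta)$ from ``local Lipschitz continuity of $\log$'' is the main analytic gap. The Fr\'echet derivative of $\log$ at $Q$ blows up as eigenvalues of $Q$ approach $-1$; the paper shows $\|E\|_2\le\|\eta\|_2\sum_i|\theta_i/\sin\theta_i|$ and bounds the eigenangle sum by $O(n^2)$ \emph{w.h.p.\ over Haar-random $Q$}. Combined with $\|\eta\|_2\le 2n\eta$ this gives $\|E\|_2=O(n^3\eta)$ directly. The passage $h\mapsto H\mapsto M$ then contributes only $O(1)$ (the paper shows $D(M_Q,M_{Q'})\le\tfrac14\|E\|_2$), so all three powers of $n$ live in the logarithm step, not in $h\to H$ as you suggest.
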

We prove the correctness and complexity of the protocol below.
Note that similar results on efficient tomography of Gaussian channels~\cite{oszmaniec_fermion_2022} and efficient learning of states prepared by circuits with only a small number of non-matchgate gates~\cite{mele_efficient_2024} are available.
The former uses a similar technique of learning the corresponding orthogonal matrix.
The latter considers a more general setting but learns only the Gaussian state rather than the whole unitary and relies on compressing the ``non-Gaussianity'' into a small state on which full tomography is performed. 

\subsection{Finding unsigned entries of the orthogonal matrix}
We first note that the set $\{ (\gamma_S M_Q \otimes I) \ket{\psi} : S \subseteq [2n] \}$ forms an orthonormal basis:
\begin{align*}
    \bra{\psi} (M_Q^\dag \gamma_{S'}^\dag& \otimes I)( \gamma_{S} M_Q \otimes I) \ket{\psi}\\
    & = \Tr (M_Q^\dag \gamma_{S'}^\dag \gamma_{S} M_Q)\nonumber\\
    &= \sum_{\abs{P} = \abs{S' \oplus S}} (\pm) \det(Q|_{S \oplus S',\,P}) \Tr(\gamma_P) \\
    &= \delta_{S,\,S'}
\end{align*}
since $\Tr(\gamma_S) = 0$ for any $S \neq \emptyset$.

Consider measuring the state $(M_Q \gamma_\mu \otimes I)  \ket{\psi}$ with respect to $\{( \gamma_S M_Q \otimes I) \ket{\psi} : S \subseteq [2n] \}$.
We have:
\begin{align}
    \mathbb{P}(S) 
    &= \abs{\bra{\psi} (M_Q^\dag \gamma_{S}^\dag \otimes I) (M_Q \gamma_\mu \otimes I)  \ket{\psi}}^2 \nonumber\\
    &= \abs{2^{-n}\Tr(M_Q^\dag \gamma_S^\dag  M_Q  \gamma_\mu)}^2 \nonumber\\
    &= \abs{2^{-n}\sum_{\abs{S'} = \abs{S}} \det(Q|_{S,\,S'}) \Tr(\gamma_{S'}\gamma_\mu) }^2 \nonumber \\
    &= \begin{cases}
        Q_{\mu \nu}^2 & S = \{ \nu \} \\
        0 & \text{otherwise}.
    \end{cases}
\end{align}
Suppose we take $K$ such measurements and report estimates for the squared entries $Q_{\mu \nu}^2$ according to the observed frequencies.

By the Hoeffding inequality and a union bound, we can learn each squared entry within additive error $\eta$ using $K = \mathcal{O}(\log(n) / \eta^2 )$ measurements w.h.p.

By taking the positive square roots, this is equivalent to learning the unsigned entries $\pm Q_{\mu \nu}$ up to additive error $\eta / 2$.

Repeating for each row requires a total of $\mathcal{O}(n \log(n) / \eta^2)$ measurements.

\subsection{Fixing signs up to row-wise sign errors}
We now have a matrix consisting of the unsigned entries of $Q$ with additive errors.
Let the matrix be $\tilde{Q} = \tilde{Q}_{\mu \nu}$.
Let the necessary sign corrections be $t_\mu s_{\mu \nu}$, where $t_\mu,\,s_{\mu \nu} \in \{\pm 1 \}$ and we have factored out whole-row sign corrections for each row.
Explicitly, 
\begin{align}
    \overline{Q}_{\mu \nu} &\coloneqq s_{\mu \nu} \tilde{Q}_{\mu \nu} + \eta_{\mu \nu} \\
    Q_{\mu \nu} &= t_\mu \overline{Q}_{\mu \nu} \nonumber\\
    &= t_\mu s_{\mu \nu} \tilde{Q}_{\mu \nu} + t_\mu \eta_{\mu \nu},
\end{align}
where $\abs{\eta_{\mu \nu}} < \eta$ and $s_{\mu,1} = +1$.

We will estimate certain 2-point correlation functions:
\begin{align}
    \Gamma^{(0)}_{jk} &\coloneqq \bra{0}M^\dag (i \gamma_j \gamma_k) M\ket{0}, \\
    \Gamma^{(l)}_{jk} &\coloneqq \bra{0}X_l M^\dag (i \gamma_j \gamma_k) M X_l\ket{0}.
\end{align}
We may compute an explicit expression for $\Gamma^{(0)}_{jk}$:
\begin{align}
    \Gamma^{(0)}_{jk} 
    &= i \sum_{a < b} \det(Q|_{\{a,\,b\},\,\{j,\,k\}}) \bra{0} \gamma_a \gamma_b \ket{0} \nonumber\\
    &= i \sum_{l = 1}^{n} \det(Q|_{\{2l-1,\,2l\},\,\{j,\,k\}}) \bra{0} i Z_l \ket{0} \nonumber\\
    &= - \sum_{l = 1}^{n} \det(Q|_{\{2l-1,\,2l\},\,\{j,\,k\}}).
\end{align}
Similarly,
\begin{align}
    \Gamma^{(l)}_{jk} 
    &= i \sum_{l' = 1}^{n} \det(Q|_{\{2l'-1,\,2l'\},\,\{jk\}}) \bra{0} i X_l Z_l' X_l \ket{0}\nonumber\\
    \begin{split}
        &=\det(Q|_{\{2l-1,\,2l\},\,\{jk\}}) \\
        &{}\ \ - \sum_{l' \neq l} \det(Q|_{\{2l'-1,\,2l'\},\,\{jk\}})
    \end{split}
\end{align}
We therefore see that:
\begin{equation}
     \frac{\Gamma^{(l)}_{jk} - \Gamma^{(0)}_{jk} }{2} = \det(Q|_{\{2l-1,\,2l\},\,\{jk\}})
\end{equation}
Setting $j = 1$ and expanding, we have:
\begin{align}
    C_{lk} \coloneqq \frac{\Gamma^{(l)}_{1k} - \Gamma^{(0)}_{1k} }{2} 
    &= Q_{2l-1,\,1}Q_{2l,\,k} - Q_{2l,\,1} Q_{2l-1,\,k} 
\end{align}
For clarity, we first consider the error-free case, where $\eta_{\mu \nu} = 0$.
Then our matrix of unsigned entries of $Q$ allows us to compute 4 guesses for the value of $C_{lk}$ (recalling that we fixed $s_{\mu,\,1} = +1$),
\begin{multline}
    {D}_{lk}^{\pm \pm} \coloneqq t_{2l-1}t_{2l} \Bigl( 
    {Q}_{2l-1,\,1} (\pm {Q}_{2l,\,k}) \\
    - {Q}_{2l,\,1} (\pm {Q}_{2l-1,\,k}) 
    \Bigr).
\end{multline}
If none of the 4 $D_{lk}^{\pm\pm}$ values are zero, then they are all distinct, and one of them exactly equals $C_{lk}$.
We would therefore fix the signs of ${Q}_{2l,\,k}$ and $Q_{2l-1,\,k}$ in terms of $t_{2l-1}t_{2l}$.
The set of matrices where any $D^{\pm\pm}_{lk} = 0$ has measure 0 and can safely be ignored.

In the errorful case, we have instead only approximations of $D_{lk}^{\pm\pm}$, given by:
\begin{multline}
    \tilde{D}_{lk}^{\pm \pm} \coloneqq t_{2l-1}t_{2l} \Bigl(  
    \tilde{Q}_{2l-1,\,1} (\pm \tilde{Q}_{2l,\,k}) \\
    - \tilde{Q}_{2l,\,1} (\pm \tilde{Q}_{2l-1,\,k}) 
    \Bigr).
\end{multline}
Furthermore, we only have an estimate $\tilde{C}_{lk}$ of $C_{lk}$ up to additive error $\epsilon$.
Obtaining these for each $C_{lk}$ w.h.p.~requires $\mathcal{O}(n^2 \log(n)/\epsilon^2)$ queries by the Hoeffding inequality and a union bound.

The true value $\tilde{D}_{lk}^{s_{2l-1,\,k} s_{2l,\,k}}$ satisfies:
\begin{align}
    \abs{\tilde{D}_{lk}^{s_{2l-1,\,k} s_{2l,\,k}} - \tilde{C}_{lk} } \leq 4\eta + \epsilon + \mathcal{O}(\eta^2),
\end{align}
using $\abs{Q_{\mu \nu}} \leq 1$.

Conversely, incorrect choices of sign are such that:
\begin{align}
    \abs{\tilde{D}_{lk}^{\pm \pm} - \tilde{C}_{lk} } \geq
    \abs{2F - (4\eta + \epsilon)},
\end{align}
where $F \in \{ \abs{Q_{2l,\,1}Q_{2l-1,\,k}},\, \abs{Q_{2l-1,\,1}Q_{2l,\,k}}, \, \abs{C_{lk}} \}$ for the 3 incorrect cases.

If $4\eta+\epsilon < \min \{ \abs{Q_{2l,\,1}Q_{2l-1,\,k}},\, \abs{Q_{2l-1,\,1}Q_{2l,\,k}}, \, \abs{C_{lk}}\}$, then we are guaranteed that the true choice of $\tilde{D}_{lk}^{\pm \pm}$ is the closest to $\tilde{C}_{lk}$.

In Appendix~\ref{app:sign_error_bounds}, we prove that
$\mathbb{P}(F < \kappa) = \mathcal{O}(\kappa^{1/3})$.
Taking a union bound over the $2n^2$ such calculations that we perform, we have that $\kappa = \mathcal{O}(n^{-6})$ suffices to ensure that all the quantities are larger than $\kappa$ with high probability.
Therefore, $\epsilon, \eta = \mathcal{O}(n^{-6})$ suffices to successfully learn the matrix $s$ w.h.p., fixing the signs up to overall row-wise errors.

\subsection{Fixing row-wise sign errors}
Assuming the above protocol succeeds, we have some matrix $\overline{Q}$, where $\overline{Q}_{\mu \nu} = t_\mu Q_{\mu \nu}$. 

For any set of rows $R$ where $t_\mu = -1$, $M_{\overline{Q}}$ is straightforwardly related to $M_Q$. 
Let $M_{R} \coloneqq M_Q \gamma_{R}$.
Then $M_R$ acts as:
\begin{align}
    M_{R} \gamma_\mu M_{R}^\dag 
    &= M_Q \gamma_R \gamma_\mu \gamma_R^\dag M_Q^\dag \nonumber \\
    &= (-1)^{\abs{R} - \mathbbm{1}_{\mu \in R}} M_Q \gamma_\mu M_Q^\dag \nonumber\\
    &= (-1)^{\abs{R}} \sum_{\nu} (-1)^{\mathbbm{1}_{\mu \in R}} Q_{\mu \nu}.
\end{align}
If $\abs{R}$ is even, then $M_{\overline{Q}} = M_R$.
If $\abs{R}$ is odd, then $M_{\overline{Q}} = M_{R^c}$, where $R^c$ is the complement of $R$ in $[2n]$.
Let $P$ be $R$ or $R^c$ corresponding to the 2 cases.

We can find the set $P$ by a single-shot distinguishing measurement.
Note that knowledge of $\overline{Q}$ is sufficient to write down an exact, poly-sized circuit of 2-qubit Gaussian operators which implements $M_Q^\dag$~\cite{jozsa_matchgates_2008}.
If we prepare $(M_{\overline{Q}}^\dag \otimes I)\ket{\psi}$ and measure w.r.t.~the basis $\{(\gamma_S M_Q^\dag \otimes I)\ket{\psi}\}$, we note that:
\begin{align}
    \mathbb{P}(S) 
    &= \abs{
    \bra{\psi}(M_Q \gamma_S M_{\overline{Q}}^\dag \otimes I )\ket{\psi}
    }^2 \nonumber\\
    &= \abs{2^{-n}\Tr ( M_Q \gamma_S \gamma_P^\dag M_Q^\dag)}^2 \nonumber\\
    &= \begin{cases}
        1 & S = P, \\
        0 & \text{otherwise}.
    \end{cases}
\end{align}
So the outcome of this measurement allows us to correct the remaining errors in the sign of each row of $\overline{Q}$, leaving us with the true matrix $Q$ up to additive error $\eta$.

In Appendix~\ref{app:estimate_error_bounds}, we prove that, w.h.p., this is sufficient to obtain an estimate $M_{Q'}$ of $M_Q$ satisfying:
\begin{align}
    D(M_Q,\,M_{Q'}) = \mathcal{O}(n^3 \eta).
\end{align}

\subsection{Complexity}
The total number of queries is easily seen to be $\mathcal{O}(n \log(n) /\eta^2 + n^2\log(n) / \epsilon^2 + 1)$.
Taking $\eta = \epsilon = \mathcal{O}(n^{-6})$, we obtain an estimate of $Q$ up to additive error $\eta$ using $\mathcal{O}((n/\eta)^2 \log(n))$ queries.

Thus the learning task may be completed in (high-degree) polynomial query complexity.
\section{Matchgate Hierarchy}
In the spirit of the Clifford Hierarchy, for any number of qubits $n$, we define an infinite family of gates $\mathcal{M}_k$ satisfying $\mathcal{M}_k \subseteq \mathcal{M}_{k+1}$ for each $k$, which we call the Matchgate Hierarchy.

We define the set $\Gamma_1 \coloneqq \{ \gamma_\mu : \mu \in [2n]\}$.
The $\gamma_\mu$ play the role of the Pauli generators $\{\sigma_{x_i},\,\sigma_{z_i}\}$.

We then recursively define the Matchgate Hierarchy by:
\begin{align}
    \mathcal{M}_1 &\coloneqq \{ M \in U(2^n) : M = \sum_\mu a_\mu \gamma_\mu ; a_\mu \in \mathbb{R} \} \\
    \mathcal{M}_k &\coloneqq \{ M \in U(2^n) : M \Gamma_1 M^\dag \subseteq \mathcal{M}_{k-1} \}.
\end{align}
The condition that $\mathcal{M}_1$ gates are unitary immediately gives $\sum_\mu a_\mu^2 = 1$.

We note that the set $\mathcal{M}_2$ consists of gates $M$ for which, for each $\mu \in [2n]$,
\begin{equation}
    M \gamma_\mu M^\dag = \sum_{\nu = 1}^{2n} Q_{\mu \nu} \gamma_\nu
\end{equation}
for some $Q \in \text{Mat}(2n,\,\mathbb{R})$.

We may evaluate the anti-commutator of $M\gamma_\mu M^\dag$ and $M\gamma_\nu M^\dag$ in 2 ways:
\begin{align}
    \{ M \gamma_\mu M^\dag,\, M \gamma_\nu M^\dag \} 
    &= M \{ \gamma_\mu,\,\gamma_\nu \} M^\dag \nonumber \\
    &= \delta_{\mu \nu} M M^\dag 
    = \delta_{\mu \nu} \\
    \{ M \gamma_\mu M^\dag,\, M \gamma_\nu M^\dag \}  
    &= \sum_{\sigma,\,\tau} Q_{\mu \sigma} Q_{\nu \tau} \{\gamma_\sigma,\,\gamma_\tau\} \nonumber \\
    &= \sum_{\sigma} Q_{\mu \sigma} Q^T_{\sigma \nu} 
    = (QQ^T)_{\mu \nu}
\end{align}
So we see that $(QQ^T)_{\mu \nu} = \delta_{\mu \nu}$, implying that $Q \in O(2n)$.
Such operations are sometimes referred to as \emph{extended Gaussian operations}.

As in the Clifford case, it is not straightforward to characterise the higher levels of the Hierarchy.
We do, however, have a link between the Matchgate Hierarchy and the Clifford Hierarchy:

\begin{lem}\label{lem:c_in_m}
    For any integer $k \geq 1$, $\mathcal{C}_k \subseteq \mathcal{M}_{k+1}$.
\end{lem}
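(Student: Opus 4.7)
The plan is to prove this by a short induction on $k$, using the crucial observation that in the Jordan-Wigner representation each Majorana operator $\gamma_\mu$ is itself a tensor product of single-qubit Paulis, and therefore $\gamma_\mu \in \mathcal{C}_1$. This bridges the two hierarchies: verifying $U \in \mathcal{M}_{k+1}$ amounts to checking that $U \gamma_\mu U^\dag \in \mathcal{M}_k$ for every $\mu \in [2n]$, and the Clifford-hierarchy defining relation $U \mathcal{C}_1 U^\dag \subseteq \mathcal{C}_{k-1}$ will supply exactly that information, provided the inductive hypothesis $\mathcal{C}_{k-1} \subseteq \mathcal{M}_k$ is already in hand.

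For the base case $k = 1$, I would take any Pauli $P \in \mathcal{C}_1$. Since both $P$ and $\gamma_\mu$ are Pauli strings, they either commute or anti-commute, so $P \gamma_\mu P^\dag = \pm \gamma_\mu$. Both signs give a unit-norm real linear combination of Majoranas and hence belong to $\mathcal{M}_1$, yielding $P \in \mathcal{M}_2$ and thus $\mathcal{C}_1 \subseteq \mathcal{M}_2$. For the inductive step, assuming $\mathcal{C}_{k-1} \subseteq \mathcal{M}_k$ and taking $U \in \mathcal{C}_k$, each $\gamma_\mu$ is a Pauli, so the defining property of $\mathcal{C}_k$ gives $U \gamma_\mu U^\dag \in \mathcal{C}_{k-1}$, and invoking the inductive hypothesis places $U \gamma_\mu U^\dag$ in $\mathcal{M}_k$. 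This is precisely the membership condition for $\mathcal{M}_{k+1}$, completing the induction.

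I do not anticipate a substantive obstacle; the argument is essentially structural. The only point needing a moment of care is that $\mathcal{M}_1$ is closed only under signs $\pm 1$, whereas $\mathcal{C}_k$ is taken modulo $U(1)$. This potential mismatch turns out to be benign: conjugation $V \mapsto U V U^\dag$ is invariant under $U \mapsto e^{i\phi} U$, so $\mathcal{M}_k$ for $k \geq 2$ is automatically phase-closed, and the only phases appearing in the base case are the $\pm 1$ already accommodated by $\mathcal{M}_1$. Consequently no phase book-keeping is required and the induction goes through cleanly.
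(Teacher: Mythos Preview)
Your proposal is correct and follows essentially the same inductive argument as the paper, with the same key observation that each $\gamma_\mu$ is a Pauli string and hence lies in $\mathcal{C}_1$. The only difference is cosmetic in the base case: the paper identifies Paulis with Majorana monomials $\gamma_S$ and observes these are (extended) Gaussian, whereas you directly verify $P\gamma_\mu P^\dag = \pm\gamma_\mu \in \mathcal{M}_1$; both immediately give $\mathcal{C}_1 \subseteq \mathcal{M}_2$.
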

\begin{proof}
    We proceed by induction.
    
    For the base case, note that $\mathcal{C}_1$ is the set of Pauli strings, which exactly coincides with the set of Majorana monomials $\{\gamma_S : S \subseteq [2n] \}$ up to factors of $\pm1$ or $\pm i$. Since $\gamma_S$ is easily seen to be a Gaussian operation, we have $\mathcal{C}_1 \subseteq \mathcal{M}_2$.

    For the inductive step, note that $\gamma_\mu \in \mathcal{C}_1$ for any $\mu \in [2n]$.
    Therefore, for any $C \in \mathcal{C}_k$, we have
    $C\gamma_\mu C^\dag \in \mathcal{C}_{k-1}$
    by definition of the Clifford hierarchy.
    By the inductive hypothesis, $\mathcal{C}_{k-1} \subseteq \mathcal{M}_{k}$. Since this is true for all $\mu \in [2n]$, we have $C \in \mathcal{M}_{k+1}$.
\end{proof}
A noteworthy implication of \autoref{lem:c_in_m} is that SWAP $\in \mathcal{M}_3$.
SWAP is a universality-enabling gate for matchgates and is in some sense the maximally non-matchgate 2-qubit gate. 
This is of potential interest for fault-tolerant implementation of quantum protocols based on a ``Matchgate + SWAP'' universal gate set.

\section{Learning Matchgate Hierarchy Operations}
The learning algorithm for Gaussian operations can be generalised to learning any operation from the $\mathcal{M}_k$ hierarchy.

We will require two Lemmas, given in~\cite{low_learning_2009}, which trivially extend to the current setting.
    \begin{lem}[Lemma 18,~\cite{low_learning_2009}]
        For unitaries $U_1,\,U_2$, if for all $\mu \in [2n]$
        \begin{align}
            D^+(U_1 \gamma_\mu U_1^\dag,\,U_2 \gamma_\mu U_2^\dag) \leq \delta
        \end{align}
        then for any $S \subset [2n]$
        \begin{align}
            D^+(U_1 \gamma_S U_1^\dag,\,U_2 \gamma_S U_2^\dag) \leq 2n\delta
        \end{align}
    \end{lem}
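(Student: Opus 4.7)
The plan is to prove the bound by induction on $|S|$, leveraging the multiplicativity $\gamma_S = \gamma_{\mu_1} \gamma_{\mu_2} \cdots \gamma_{\mu_{|S|}}$ (indices in increasing order), so that $U_i \gamma_S U_i^\dag = \prod_j (U_i \gamma_{\mu_j} U_i^\dag)$, where each factor is itself a unitary. The per-factor error is exactly what the hypothesis controls, so it suffices to show that $D^+$ behaves subadditively when we multiply unitaries together.

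The key ingredient is therefore a ``product rule'' for $D^+$: for any unitaries $A_1,A_2,B_1,B_2$,
\[ D^+(A_1 B_1,\, A_2 B_2) \;\leq\; D^+(A_1, A_2) + D^+(B_1, B_2). \]
This follows by inserting the intermediate point $A_1 B_2$, applying the triangle inequality for the Frobenius norm to $A_1 B_1 - A_2 B_2 = A_1(B_1 - B_2) + (A_1 - A_2)B_2$, and using unitary invariance to drop the leading $A_1$ and trailing $B_2$ factors inside $\|\cdot\|_2$.

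Iterating this product rule $|S|-1$ times along the factorization above, each step contributing the per-Majorana error, yields
\[ D^+(U_1 \gamma_S U_1^\dag,\, U_2 \gamma_S U_2^\dag) \;\leq\; \sum_{\mu \in S} D^+(U_1 \gamma_\mu U_1^\dag,\, U_2 \gamma_\mu U_2^\dag) \;\leq\; |S|\,\delta \;\leq\; 2n\,\delta, \]
since $|S| \leq 2n$. This gives the claim.

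There is essentially no real obstacle here: the argument is a routine telescoping triangle inequality. The only things to verify carefully are (i) that each $U_i \gamma_{\mu_j} U_i^\dag$ is unitary, which is immediate since $\gamma_\mu$ and $U_i$ are unitary, and (ii) the product rule for $D^+$, which is a standard fact about normalized Frobenius distance on unitaries and already appears implicitly in Low's proof in the Clifford setting. The extension to the Majorana setting requires no new ingredient beyond swapping the generator set $\{\sigma_{x_i},\sigma_{z_i}\}$ for $\{\gamma_\mu\}$ of the same cardinality $2n$.
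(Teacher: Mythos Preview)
Your proposal is correct and matches the paper's own proof, which simply states that the result follows by induction on the number of Majorana operators in $\gamma_S$ together with the triangle inequality. Your explicit ``product rule'' $D^+(A_1B_1,A_2B_2)\leq D^+(A_1,A_2)+D^+(B_1,B_2)$ is precisely the triangle-inequality step the paper invokes, so the two arguments are essentially identical.
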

    \begin{proof}
        The proof follows by induction on the number of Majorana operators in $\gamma_S$ and the triangle inequality.
    \end{proof}
    \begin{lem}[Lemma 17,~\cite{low_learning_2009}]\label{lem:D_bound}
        For unitaries $U_1,\,U_2$, if for all $S \subset [2n]$
        \begin{align}
            D^+(U_1 \gamma_S U_1^\dag,\,U_2 \gamma_S U_2^\dag) \leq 2n\delta
        \end{align}
        then
        \begin{align}
            D(U_1,\,U_2) \leq 2n\delta
        \end{align}
    \end{lem}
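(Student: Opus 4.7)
The plan is to establish the averaging identity
\begin{equation*}
D(U_1,\,U_2)^2 = \frac{1}{2^{2n}} \sum_{S \subseteq [2n]} D^+(U_1 \gamma_S U_1^\dag,\,U_2 \gamma_S U_2^\dag)^2,
\end{equation*}
from which the lemma follows at once: by hypothesis each term on the right is at most $(2n\delta)^2$, so the average, and hence $D(U_1,U_2)^2$, is too.

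To derive this identity, first I would set $V := U_1^\dag U_2$ and expand it in the Majorana monomial basis as $V = \sum_S c_S \gamma_S$, using the orthogonality $2^{-n}\Tr(\gamma_S^\dag \gamma_{S'}) = \delta_{S,S'}$. Since $V$ is unitary, Parseval gives $\sum_S |c_S|^2 = 1$, and $c_\emptyset = 2^{-n}\Tr(U_1^\dag U_2)$, so directly $D(U_1,U_2)^2 = 1 - |c_\emptyset|^2$.

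For the right-hand side, I would unfold each summand as $D^+(\cdot)^2 = 1 - \Re(2^{-n}\Tr(\gamma_S^\dag V \gamma_S V^\dag))$ and use the fact that $\gamma_S^\dag \gamma_{S'} \gamma_S = \epsilon(S,S')\gamma_{S'}$ for a sign $\epsilon(S,S') \in \{\pm 1\}$ recording whether $\gamma_S$ and $\gamma_{S'}$ commute. Substituting the expansions of $V$ and $V^\dag$ and applying orthogonality collapses the trace to the real number $2^n \sum_{S'} |c_{S'}|^2 \epsilon(S,S')$. Summing over $S$ then reduces the identity to the combinatorial claim
\begin{equation*}
\sum_{S \subseteq [2n]} \epsilon(S,S') = \begin{cases} 2^{2n} & \text{if } S' = \emptyset, \\ 0 & \text{otherwise}, \end{cases}
\end{equation*}
which I would verify using $\epsilon(S,S') = (-1)^{|S||S'| - |S \cap S'|}$ and splitting $S$ according to its intersection with $S'$ and its complement in $S'$; the resulting double sum factorises into a product with at least one factor equal to $(1-1)^k = 0$ whenever $S' \neq \emptyset$.

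The main obstacle is this combinatorial cancellation; once it is in hand, everything else is an algebraic collapse, and the desired bound $D(U_1,U_2) \leq 2n\delta$ follows directly from averaging the hypothesis.
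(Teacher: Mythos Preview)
Your proposal is correct and matches the paper's approach: both arguments establish the averaging identity $D(U_1,U_2)^2 = 4^{-n}\sum_S D^+(U_1\gamma_S U_1^\dag,\,U_2\gamma_S U_2^\dag)^2$ and then bound the average by the hypothesis. The paper simply cites the Pauli $1$-design property (carried over to Majorana monomials via Jordan--Wigner) for this identity, whereas you unpack that property by hand via the sign sum $\sum_S \epsilon(S,S')$ --- the same content, just made explicit.
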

    \begin{proof}
        The proof follows from the group of Paulis being a 1-design, and the fact that, under the Jordan-Wigner representation, each $\gamma_S$ is simply a Pauli string up to a factor of $i$.
    \end{proof}

\begin{thm}
    Given oracle access to an unknown operation $M \in \mathcal{M}_k$ for $k \geq 2$ and its conjugate $M^\dag$, 
    an estimate $M'$ of $M$ satisfying $D(M,\,M') = \mathcal{O}(n^{3} \eta)$ can be determined w.h.p.~with query complexity $\tilde{\mathcal{O}}(4^{3(k-2)} \cdot n^{2(k-2)} \cdot(n/\eta)^2)$ for any $\eta = \mathcal{O}(n^{-(6-k)}),\,o(1)$.
\end{thm}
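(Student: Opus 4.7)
The plan is to proceed by induction on $k$, with \autoref{thm:m2_learn} serving as the base case $k=2$. For the inductive step, the defining relation $\mathcal{M}_k = \{M : M \Gamma_1 M^\dag \subseteq \mathcal{M}_{k-1}\}$ hands over a direct recursion: for each $\mu \in [2n]$, the conjugate $V_\mu \coloneqq M \gamma_\mu M^\dag$ lies in $\mathcal{M}_{k-1}$, so the inductive hypothesis applies to it.

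The key enabling step is oracle simulation. Any single query to $V_\mu$ (or to $V_\mu^\dag = V_\mu$, by Hermiticity) can be implemented using one query to $M^\dag$, one to $M$, and a free Jordan--Wigner circuit for $\gamma_\mu$ in between. Feeding this simulated oracle into the inductive algorithm at a suitably scaled precision $\eta_{k-1}$, once for every $\mu \in [2n]$, produces estimates $V'_\mu$ satisfying $D(V_\mu, V'_\mu) = \mathcal{O}(n^3 \eta_{k-1})$ at the per-call cost promised by the hypothesis.

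Next I would reassemble these estimates into a single $M'$, exactly in the spirit of Low's reconstruction: since the $\gamma_\mu$ generate the full operator algebra irreducibly, the family $\{M \gamma_\mu M^\dag\}_\mu$ pins down $M$ up to an overall phase. The error analysis rides on the two lemmas already stated in the paper. The first of them propagates single-Majorana $D^+$-bounds to arbitrary monomials via $D^+(V_S, V'_S) \le 2n \delta$ whenever $D^+(V_\mu, V'_\mu) \le \delta$ holds for every $\mu$, and \autoref{lem:D_bound} then converts this into $D(M, M') \le 2n \delta$. Choosing $\delta = \mathcal{O}(n^2 \eta)$ hits the target $D(M, M') = \mathcal{O}(n^3 \eta)$ and, in turn, forces the recursive precision to satisfy $\eta_{k-1} = \mathcal{O}(\eta/n)$. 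Telescoping this $\eta \mapsto \eta/n$ rescaling through the $k-2$ levels of recursion down to the base case range $\eta_2 = \mathcal{O}(n^{-6})$ yields exactly the stated condition $\eta = \mathcal{O}(n^{-(6-k)})$, and each recursion level multiplies the query count by $2n$ (the number of Majorana directions) times a further $\mathcal{O}(n^2)$ (from the $1/\eta^2$-scaling of the per-call cost after the $\eta \mapsto \eta/n$ substitution), reproducing the claimed $\tilde{\mathcal{O}}(4^{3(k-2)} n^{2(k-2)} (n/\eta)^2)$ once logarithmic union-bound factors are absorbed.

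The main obstacle is the mismatch between the phase-insensitive $D$-distance returned by the recursive call and the phase-sensitive $D^+$-distance demanded by the assembly lemmas. I would dispatch this just as in the proof of \autoref{thm:m2_learn}: each $V_\mu$ is a Hermitian unitary with spectrum $\{\pm 1\}$, so rounding $V'_\mu$ to its nearest Hermitian representative leaves only a single global sign to determine, which is pinned down by a constant number of single-shot distinguishing measurements per $\mu$, analogous to the row-wise sign-fixing step of Section~II. After this correction $D^+(V_\mu, V'_\mu) \le \mathcal{O}(D(V_\mu, V'_\mu))$ holds, so the error budget propagates cleanly through the induction without accumulating any super-constant phase-conversion losses.
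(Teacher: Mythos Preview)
Your proposal tracks the paper's proof closely: induction from the $k=2$ base case of \autoref{thm:m2_learn}, simulation of an oracle for each $V_\mu \coloneqq M\gamma_\mu M^\dag \in \mathcal{M}_{k-1}$ at a cost of two queries per call, recursive learning of the $V_\mu$, a Hermiticity-based upgrade from $D$ to $D^+$, assembly via the two stated lemmas, and a final sign correction. The paper makes the phase-conversion concrete rather than invoking an abstract ``round to nearest Hermitian'': it expands $V_\mu = \sum_S c_S\,\overline{\gamma}_S$ with real $c_S$, writes the estimate's coefficient vector as $c'' = \lambda c + \chi c^\perp$, approximates $-\arg(\lambda)$ by $-\arg(c_S'')$ for any single $S$, and shows $D^+(V_\mu,\,e^{-i\theta}V_\mu'') \le 2\delta$. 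Your ``nearest Hermitian representative'' is morally the same move.

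The one place where your proposal slips is the sign-fixing. You say the residual $\pm 1$ on each $V_\mu$ ``is pinned down by a constant number of single-shot distinguishing measurements per $\mu$''. As written this does not work: that sign is a global phase of the individual unitary $V_\mu$, and no black-box application of $V_\mu$ to a state can reveal it. The paper --- and, in fact, the Section~II step you invoke --- does not fix these signs one at a time. Instead it defines $M''$ by $M''\gamma_\mu (M'')^\dag = V_\mu''$, observes that $M'' \approx M\gamma_S$ where $S$ collects exactly the flipped indices, prepares $(M^\dag M''\otimes I)\ket{\psi}$, and measures in the basis $\{(\gamma_{S'}\otimes I)\ket{\psi}\}$; this single measurement returns $S$ with probability at least $1-16n^2\delta^2$. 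Replacing your per-$\mu$ claim with this global measurement closes the gap and leaves the rest of your argument intact.
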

\begin{proof}
    The proof is by induction, where the base case is given by~\autoref{thm:m2_learn} for $k = 2$.

    For the inductive step, suppose we have a learning algorithm for members of $\mathcal{M}_k$, which outputs estimates with phase-insensitive precision $\delta$.

    Let $M \in \mathcal{M}_{k+1}$, and for any $\mu \in [2n]$, let $M \gamma_\mu M^\dag = \alpha_\mu M_\mu$ for some $\alpha_\mu = \pm 1$.
    We may express $M_\mu$ in the Hermitian basis of Majorana monomials $\overline{\gamma}_{S} \coloneqq i^{f(\abs{S})} \gamma_S$, where the factors of $i$ are chosen to make $\overline{\gamma}_{S}^\dag = \overline{\gamma}_{S}$.
    Let $M_\mu = \sum_S c_S \overline{\gamma}_{S}$.
    Since $M_\mu$ is Hermitian, $c_S \in \mathbb{R}$ for each $S$, and we choose $\alpha_\mu$ so that $c_{\emptyset} \in \mathbb{R}_+$.
    
    The $M_\mu$ are elements of $\mathcal{M}_k$, so we may (phase-insensitively) learn them, obtaining estimates $M_{\mu}''$ satisfying $D(M_\mu,\, M_\mu'') \leq \delta$.
    We wish to find a phase-sensitive estimate by exploiting the Hermiticity of $M_\mu$.
    
    Let $M_\mu'' = \sum c_S'' \overline{\gamma}_{S}$.
    The vector of coefficients $c''$ may be written $c'' = \lambda c + \chi c^\perp$, for some $c^\perp$ satisfying $\langle c,\,c^\perp\rangle = 0$ for the usual complex inner product $\langle \cdot,\,\cdot \rangle.$

    It is easy to check that $D(M_\mu,\,M_\mu'') \leq \delta$ leads to $\abs{\langle c,\,c'' \rangle}^2 \geq 1 - \delta^2$, which in turn implies $\abs{\lambda}^2 \geq 1- \delta^2$.

    We wish to modify the phase of $M_\mu''$ in order to obtain a phase-sensitive estimate. 
    The optimal phase to multiply by is $-\arg(\lambda)$, but this is unknown.
    We approximate it by $\theta \coloneqq -\arg(\lambda c_S + \chi c_S^\perp)$ for any $S$.
    Some elementary trigonometry gives:
    \begin{equation}
        \abs{\arg(\lambda c_S + \chi c_S^\perp) - \arg(\lambda)} \leq \frac{\delta}{\sqrt{1 - \delta^2}}.
    \end{equation}
    Then we may compute the relevant real trace for $D^+(M_\mu,\, e^{-i\theta}M_\mu'')$:
    \begin{align}
        \Re\left(\Tr(M_\mu^\dag e^{-i \theta}M_\mu'') \right)
        &= d\Re\left( \abs{\lambda} e^{i(\arg(\lambda) - \theta)} \right) \nonumber\\
        &= d\abs{\lambda} \cos(\arg(\lambda) - \theta)\nonumber\\
        & \geq d\sqrt{1- \delta^2} \cdot \Big(1- \frac{\delta^2}{1 - \delta^2}\Big) \nonumber\\
        &= d\frac{1-2\delta^2}{\sqrt{1-\delta^2}}.
    \end{align}
    Substituting into the definition of $D^+$, we see that $D^+(M_\mu,\, e^{-i\theta}M_\mu'') \leq 2\delta$.  
    
    Now let $M''$ be the operator such that $M''\gamma_\mu M''^\dag = M_\mu''$.
    Also, let $M'$ be such that $M' \gamma_\mu M' = M_\mu$, i.e. all the phases $\alpha_\mu$ are set to $+1$.
    Similarly to the level-2 case, $M' = M \gamma_S$ for some set $S \subseteq [2n]$.
    By~\autoref{lem:D_bound}, we see:
    \begin{equation}\label{eq:M_distance}
        D(M',\, e^{-i\theta}M'') \leq 4n\delta.
    \end{equation}

    We may identify the phases $\alpha_\mu$ by preparing $(M^\dag M'' \otimes I)\ket{\psi}$ and measuring w.r.t.~the basis $\{ (\gamma_{S'}^\dag \otimes I)\ket{\psi} \}$.
    Then the probability of obtaining outcome $S$ is:
    \begin{align}
        \mathbb{P}(S) &= \abs{\bra{\psi}(\gamma_S \otimes I)(\gamma_S^\dag M'^\dag M'')\ket{\psi} }^2 \nonumber\\
        &= \abs{d^{-1}\Tr(M'^\dag M'')}^2\nonumber\\
        &\geq 1 - 16n^2 \delta^2,
    \end{align}
    using~\eqref{eq:M_distance}.

    The outcome of the measurement allows us to correct our estimate to $e^{-i\theta}M'' \gamma_S$, satisfying w.h.p.
    \begin{equation}
        D(M,\,e^{-i\theta}M'' \gamma_S) \leq 4n\delta.
    \end{equation}
    Obtaining the same precision as the $k = 2$ case thus requires setting $\eta' = \eta / (4n)^{k-2}$.
    
    For each additional level in the hierarchy, we must learn $2n$ operators from the previous level, requiring $4n$ queries to prepare them.

    Therefore, the query complexity at level $k$, $\mathcal{Q}(k)$, is given by:
    \begin{align}
        \mathcal{Q}(k) &=\mathcal{O}\left((4n)^{k-2} \cdot \left( \frac{n}{\eta/ (4n)^{k-2}} \right)^2 \right)\nonumber\\ 
        &= 4^{3(k-2)} \cdot n^{2(k-2)} \cdot (n/\eta)^2
    \end{align}
    as claimed.
\end{proof}

\section*{Acknowledgements}
JC thanks Wilfred Salmon for helpful discussions.
SS acknowledges support from
the Royal Society University Research Fellowship and
“Quantum simulation algorithms for quantum chromodynamics” grant (ST/W006251/1) and EPSRC Reliable and Robust Quantum Computing grant (EP/W032635/1).

\bibliographystyle{IEEEtran}
\bibliography{IEEEabrv,hierarchy.bib}

\appendices

\section{Error Bounds for Fixing Signs}
\label{app:sign_error_bounds}
We wish to obtain upper bounds on $\mathbb{P}(F < \kappa)$ for 
$ 
F \in \{ \abs{Q_{2l-1,\,1}Q_{2l,\,k}}, \, \abs{Q_{2l,\,1}\, Q_{2l-1,\,k}},\,
$ 
$ 
\abs{Q_{2l-1,\,1}\, Q_{2l,\,k} - Q_{2l,\,1}\, Q_{2l-1,\,k}} \},
$
for any $l \in [n],\,k \in [2n] \setminus \{1\}$, where the probability is over the Haar distribution on the orthogonal matrices.
We will fix $l = 1,\,k = 2$ and use a union bound to obtain the full result.

Since $Q$ is orthogonal, the first and second rows of $Q$ are random orthogonal unit vectors.
We may obtain them by performing Gram-Schmidt orthonormalisation on 2 random points on the unit sphere.

Let the two random points be given by the vectors $\mathbf{v_1},\,\mathbf{v_2}$.
We choose to parameterise them in polar coordinates by
\begin{equation}
    \mathbf{v_i} = \bigl( \sin(\theta^i_1),\, \cos(\theta^i_1) \cos(\theta^i_2),\,\ldots \bigr),
\end{equation}
where $\theta^i_1 \sim U[0,\,2\pi)$ and $\theta^i_2 \sim U[0,\,\pi)$, and all angles are distributed independently.
We are free to choose the ordering in the above parametrisation, i.e. which coordinate contains the $\sin(\theta^i_1)$ term.

We first consider $F_1 \coloneqq \abs{Q_{11}Q_{22}}$, with $F_2 \coloneqq \abs{Q_{21}Q_{12}}$ following similarly.

We have
\begin{align}
    \mathbb{P}(F_1 < \kappa ) 
    &\leq \mathbb{P}(\min(\abs{\cos(\theta^1_1),\,\cos(\theta^2_1)}) < \sqrt{\kappa}) \nonumber\\
    &\leq 2 \mathbb{P}(\abs{\cos(\theta^1_1)} < \sqrt{\kappa}) \nonumber\\
    &= \mathcal{O}(\sqrt{\kappa}),
\end{align}
to leading order.

We now consider $F_3 \coloneqq \abs{Q_{11} Q_{22} - Q_{21} Q_{12}} = \abs{\det(Q|_{\{1,\,2\},\,\{1,\,2\}})}$.
To begin, we consider the probability of this term being small if we did not perform the Gram-Schmidt procedure, denoting the corresponding quantity $\tilde{F_3}$.

Since the determinant of a 2-by-2 matrix is the area of the parallelogram defined by the rows of the matrix, we can express $\tilde{F_3}$ as
\begin{equation}
    \tilde{F_3} = \norm{\Pi (\mathbf{v_1})}  \norm{\Pi (\mathbf{v_2})} \sin(\phi),
\end{equation}
where $\Pi$ is the orthogonal projection onto the first 2 coordinates and $\phi$ is the angle between the projected vectors in the plane.
Noting that $\norm{\Pi (\mathbf{v_i})} = \abs{\cos(\theta^i_2)}$ and $\phi \sim U[0,\,2\pi)$, and using the shorthand $c(\cdot) = \cos(\cdot)$ and $s(\cdot) = \sin(\cdot)$, we may compute
\begin{align}
    \mathbb{P}(\tilde{F_3}& < \kappa) 
    = \mathbb{P}\left( \abs{c(\theta^1_2)c(\theta^i_2)s(\phi)} < \kappa \right) \nonumber\\
    \begin{split}
        &=\mathbb{P}\left( \abs*{c(\theta^1_2)c(\theta^i_2)s(\phi)} < \kappa \;\middle|\; \abs*{c(\theta^1_2)c(\theta^i_2)} < \zeta \right)\\
        & \qquad \ \cdot \mathbb{P}\left(\abs{c(\theta^1_2)c(\theta^i_2)} < \zeta\right) \\
        & \ + \mathbb{P}\left( \abs*{c(\theta^1_2)c(\theta^i_2)s(\phi)} < \kappa \;\middle|\; \abs*{c(\theta^1_2)c(\theta^i_2)} \geq \zeta \right)\\
        & \qquad \ \ \cdot \mathbb{P}\left(\abs{c(\theta^1_2)c(\theta^i_2)} \geq \zeta\right)
    \end{split}
    \nonumber\\
    &\leq \mathbb{P}\left(\abs{c(\theta^1_2)c(\theta^i_2)} < \zeta\right) + \mathbb{P}\left( \abs*{\zeta s(\phi)} < \kappa \right) \nonumber\\
    &= \mathcal{O}(\sqrt{\zeta}) + \mathcal{O}({\kappa}/{\zeta})
\end{align}
to leading order.

Optimising over $\zeta$, we obtain the bound $\mathbb{P}(\tilde{F_3} < \kappa) \leq \mathcal{O}(\kappa^{1/3})$.

We now consider the effect of orthonormalisation of $\mathbf{v_2}$ on the quantity $F_3$.
Let the resulting vector be $\mathbf{\tilde{v}_2} = \lambda \mathbf{{v}_2} + \mu \mathbf{{v}_1}$, where $\lambda \coloneqq (1- (\mathbf{{v}_1}\cdot \mathbf{{v}_2})^2)^{-0.5}$.

The corresponding determinant in the plane of the first two coordinates is simply $F_3 = \lambda \tilde{F_3}$, since the $\mathbf{v_1}$ part of $\mathbf{\tilde{v}_2}$ does not contribute.
Since $\lambda \in [1,\,\infty)$, we have
\begin{equation}
    \mathbb{P}({F_3} < \kappa) \leq \mathbb{P}(\tilde{F_3} < \kappa) = \mathcal{O}(\kappa^{1/3}).
\end{equation}

\section{Error Bounds for Estimates of Unknown Gaussian Operations}
\label{app:estimate_error_bounds}
Given a matrix $Q'$ which approximates $Q$ up to term-wise additive error $\eta$, we wish to bound the ``distance'' between the corresponding unitary operators $M_Q$ and $M_{Q'}$ as defined by Equations~\eqref{eq:quad_ham} and~\eqref{eq:gaussian_op}.

We first consider the error in the anti-symmetric matrix $h$ which generates $Q$ via $Q = \exp(4h)$.
Let $h' = h + \frac{1}{4}E$, where $E$ is the error term whose norm we will seek to bound.

We use the series expansion of the matrix logarithm:
\begin{multline}
    E = 4(h' - h) =
    \log(Q + \eta) - \log(Q) \\
    = \int_0^\infty dz \left\{
     \frac{I}{Q + zI} \eta \frac{I}{Q + zI} 
    \right\} + \mathcal{O}(\norm{\eta}_2^2).
\end{multline}
Therefore, to leading order,
\begin{multline}\label{eq:E_norm}
    \norm{E}_2 \leq \int_0^\infty dz \left\{
     \norm{\frac{I}{Q + zI}}_2 \norm{\eta}_2 \norm{\frac{I}{Q + zI}}_2 
    \right\}.
\end{multline}
We recall that for any matrix $A$,
\begin{align}
    \norm{A}_2 = \sqrt{\sum_i \sigma_i^2(A)} = \sqrt{\sum_i \sigma_i^{-2}(A^{-1})},
\end{align}
where the $\sigma_i(A)$ are the singular values of $A$.

Letting the eigenangles of $Q$ be 
$\{\theta_i\}_{i=1}^{2n}$, it is easy to check that
$\sigma_i(Q+zI) = \sqrt{1 + 2\cos(\theta_i)z + z^2}$.
Then Equation~\eqref{eq:E_norm} becomes
\begin{align}
    \norm{E}_2 &\leq \norm{\eta}_2 \int_0^{\infty} dz \left\{ 
        \sum_i \frac{1}{1 + 2\cos(\theta_i)z + z^2}
    \right\} \nonumber\\
    &= \norm{\eta}_2 \left( \sum_i \abs{\frac{\theta_i}{\sin(\theta_i)}}\right).
\end{align}
Since the eigenangles of a Haar-random orthogonal matrix in $2n$ dimensions are uniformly distributed (in complex conjugate pairs) on the unit disc~\cite{meckes_random_2019}, we may bound the probability of the term in brackets being overly large:
\begin{align}
    \mathbb{P}\left(\sum_i \abs{\frac{\theta_i}{\sin(\theta_i)}}\geq \kappa\right) 
    &\leq \mathbb{P}\left( \bigcup_i \abs{\frac{\theta_i}{\sin(\theta_i)}} \geq \frac{\kappa}{2n}\right) \nonumber\\
    &\leq 2n \mathbb{P}\left( \sum_i \abs{\frac{\theta}{\sin(\theta)}} \geq \frac{\kappa}{2n} \right)\nonumber\\
    &\leq 4n \mathbb{P}\left( \theta \in \big[\pi - \frac{2n\pi}{\kappa}, \pi\big)\right)\nonumber\\
    &= \frac{4n^2}{\kappa}.
\end{align}
Taking $\kappa = \Omega(n^2)$ then implies $\sum_i \abs{\frac{\theta_i}{\sin(\theta_i)}} < \kappa$ w.h.p.

We may also bound $\norm{\eta}_2$ using the entry-wise expression for the 2-norm:
\begin{align}
    \norm{\eta}_2 = \sqrt{\sum_{\mu \nu}\abs{\eta_{\mu \nu}}^2} \leq \sqrt{4n^2 \cdot \eta^2} = 2n\eta.
\end{align}
Therefore, $\norm{E}_2 = \mathcal{O}(n^3 \eta)$ w.h.p.

We are now prepared to compute the desired distance, 
$$D(M_Q,\,M_{Q'}) = \sqrt{1 - \abs{{d^{-1}\Tr(M_Q^\dag M_{Q'})}}^2}.$$
Using $\Tr(\exp(A+B))\leq \Tr(\exp(A)\exp(B))$ for Hermitian matrices $A,\,B$~\cite{bhatia_matrix_1996}, we find:
\begin{align}
    \Tr(M_Q^\dag M_{Q'}) &= \Tr(\exp(-iH)\exp(iH')) \nonumber\\
    &\geq \Tr(\exp(-iH + iH'))\nonumber \\
    &= \Tr(\exp(-\frac{1}{4}\sum_{\mu \nu}E_{\mu \nu}\gamma_\mu \gamma_\nu))\nonumber \\
    \begin{split}
        &= \Tr \Biggl(I - \frac{1}{4}\sum_{\mu \nu}E_{\mu \nu}\gamma_\mu \gamma_\nu)\\
        &+ \frac{1}{32} \sum_{\mu \nu \sigma \tau} E_{\mu \nu}E_{\sigma \tau} + \mathcal{O}(\norm{E}_2^3)) \Biggr)\nonumber
    \end{split}\\
    &= d\left(1 - \frac{1}{32}\sum_{\mu \nu}E_{\mu \nu}^2 + \mathcal{O}(\norm{E}_2^4)\right),
\end{align}
recalling that $\Tr(\gamma_S) = 0$ for $S \neq \emptyset$.
Finally, to leading order we have:
\begin{align}
    D(M_Q,\,M_{Q'}) 
    &= \sqrt{1- \abs{1 - \frac{1}{32}\sum_{\mu \nu}E_{\mu \nu}^2}^2} \nonumber \\
    &\leq \sqrt{\frac{1}{16}\abs{\sum_{\mu \nu}E_{\mu\nu}^2}} \nonumber\\
    &\leq \frac{1}{4}\sqrt{\sum_{\mu \nu}\abs{E_{\mu \nu}}^2 } \nonumber\\
    &= \frac{1}{4}\norm{E}_2 = \mathcal{O}(n^3 \eta).
\end{align}

\end{document}